\newtheorem{theo}{Theorem}
\newtheorem{lem}{Lemma}
\newtheorem{corol}{Corollary}
\begin{document}
\title{Efficient and exact sampling of simple graphs with given arbitrary degree sequence}
\author{Charo I. \surname{Del Genio}}
	\affiliation{Department of Physics, University of Houston, 617 Science \& Research 1, Houston, TX 77204-5005, USA}
	\affiliation{Texas Center for Superconductivity (TcSuH), 202 Houston Science Center, Houston, TX 77204-5002, USA}
\author{Hyunju Kim}
	\affiliation{Interdisciplinary Center for Network Science and Applications (iCeNSA), Department of Physics, University of Notre Dame, Notre Dame, IN 46556, USA}
\author{Zoltán Toroczkai}
	\affiliation{Interdisciplinary Center for Network Science and Applications (iCeNSA), Department of Physics, University of Notre Dame, Notre Dame, IN 46556, USA}
\author{Kevin E. Bassler}
	\affiliation{Department of Physics, University of Houston, 617 Science \& Research 1, Houston, TX 77204-5005, USA}
	\affiliation{Texas Center for Superconductivity (TcSuH), 202 Houston Science Center, Houston, TX 77204-5002, USA}

\begin{abstract}
Uniform sampling from graphical realizations of a given degree
sequence is a fundamental component in simulation-based measurements
of network observables, with applications ranging from epidemics, through
social networks to Internet modeling. Existing graph sampling
methods are either link-swap based (Markov-Chain Monte Carlo algorithms)
or stub-matching based (the Configuration Model). Both types are
ill-controlled, with typically unknown mixing times for link-swap
methods and uncontrolled rejections for the Configuration Model.
Here we propose an efficient, polynomial time algorithm that generates
statistically independent graph samples with a given, arbitrary,
degree sequence. The algorithm provides a weight associated with
each sample, allowing the observable to be measured either uniformly
over the graph ensemble, or, alternatively, with a desired distribution.
Unlike other algorithms, this method always produces a sample,
without back-tracking or rejections. Using a central limit
theorem-based reasoning, we argue, that for large $N$, and for degree
sequences admitting many realizations, the sample weights are expected to have
a lognormal distribution. As examples, we apply our algorithm
to generate networks with degree sequences drawn from power-law
distributions and from binomial distributions. 
\end{abstract}

\maketitle

\section{Introduction}
Network representation has become an increasingly
widespread methodology of analysis to gain insight into the
behavior of complex systems, ranging from gene regulatory
networks to human infrastructures such as the Internet, power-grids
and airline transportation, through metabolism, epidemics
and social sciences~\cite{Bar02,New03,Boc06,New06}. 
These studies are primarily data driven,
where connectivity information is collected, and the structural
properties of the resulting graphs are analyzed for modeling purposes.
However, rather frequently, full connectivity data is unavailable,
and the modeling has to resort to considerations on the
\emph{class of graphs} that obeys the available structural data.
A rather typical situation is when the only information available about the
network is the degree sequence of its nodes $\mathcal D=\left\lbrace d_0, 
d_1, \dotsc, d_{N-1}\right\rbrace$.
For example, in epidemiology studies of sexually transmitted diseases~\cite{Lil01},
anonymous surveys may only collect the \emph{number} of sexual partners
of a person in a given period of time, not their identity. Epidemiologists
are then faced with constructing a \emph{typical} contact graph having
the observed degree sequence, on which disease spread scenarios can be tested.
 Another reason for studying classes or \emph{ensembles} of graphs 
obeying constraints comes from the fact that
the network structure of many large-scale real-world systems is not the result
of a global design, but of complex dynamical processes with many
stochastic elements. Accordingly, a statistical mechanics 
approach~\cite{Bar02} can be employed to characterize the 
collective properties of the system emerging from its node level (microscopic) 
properties. In this approach, statistical ensembles of graphs are 
defined~\cite{Bia09,Bia08},
representing ``connectivity microstates'' from which macroscopic
system level properties are inferred via averaging. Here we focus 
on the degree as a node characteristic, which could represent, for example, the number 
of friends of a person, the valence of
an atom in a chemical compound, the number of clients of a router, etc. 

In spite of its practical importance, finding a method to construct
degree-based graphs in a way that allows the corresponding graph ensemble
to be properly sampled has been a long-standing open
problem in the network modeling community (references using various approaches are given below).
Here we present a solution to this problem, using a biased sampling approach.
We consider degree-based graph ensembles on two levels: 1) sequence-level, where
a specific sequence of degrees is given, and 2) distribution level, where the
sequences are themselves drawn from a given degree distribution $P\left(d\right)$.
In the remainder we will focus on the fundamental case of labeled,
undirected simple graphs. In a simple graph any link connects a single pair of
distinct nodes and self loops and multiple
links between the same pair of nodes are not allowed. Without loss of generality, consider a
sequence of $N$ positive integers $\mathcal D=\left\lbrace d_0, d_1, \dotsc, d_{N-1}\right\rbrace$,
arranged in non-increasing order: $d_0\geqslant d_1\geqslant\dotsb\geqslant d_{N-1}$.
If there is at least one simple graph $G$ with degree sequence $\mathcal D$, the
sequence $\mathcal D$ is called a \emph{graphical sequence} and we say that 
$G$ \emph{realizes} $\mathcal D$. 
Note that not every sequence of positive integers can be realized by
simple graphs. For example, there is no simple graph with degree sequence
$\left\lbrace3,2,1\right\rbrace$ or $\left\lbrace5,4,3,2,1,1\right\rbrace$, while
the sequence $\left\lbrace3,3,2,2,2\right\rbrace$ can obviously be realized by a
simple graph. In general, if a sequence is graphical, then there can be several graphs 
having the same degree sequence. Also note that given a graphical sequence, 
the careless or random placing of links between the nodes may not result in a simple graph.

Recently, a direct, swap-free method to systematically construct
all the simple graphs realizing a given graphical sequence $\mathcal D$
was presented~\cite{Kim09}.
However, in general (for exceptions see Ref.~\onlinecite{Kor76}), the number of elements of
the set $\mathcal G\left(\mathcal D\right)$ of all graphs that realize sequence $\mathcal D$, 
increases very quickly with $N$: a simple
upper bound is provided by the number of all graphs with sequence $\mathcal D$,
allowing for multiple links and loops:
$\left|\mathcal G\left(\mathcal D\right)\right|\leqslant\prod_{i=0}^{N-1}d_i!\;$.
Thus, typically, systematically constructing all graphs with a
given sequence $\mathcal D$ is practical only for short sequences, such as when determining
the structural isomers of alkanes~\cite{Kim09}. For larger sequences, and in particular
for modeling real-world complex networks, 
it becomes necessary to sample $\mathcal G\left(\mathcal D\right)$.
Accordingly, several variants based on the Markov Chain
Monte Carlo (MCMC) method were developed. 
They use link-swaps~\cite{Tay82} (``switches'') to produce 
pseudo-random samples from $\mathcal G\left(\mathcal D\right)$. Unfortunately,
most of them are based on heuristics, and apart from some
special sequences, little has been rigorously shown about the methods' mixing
time, and accordingly they are ill-controlled. The literature on such MCMC methods
is simply too extensive to be reviewed here, instead, we refer the interested reader to 
Refs~\onlinecite{Coo07,Kan99,Vig05} and the references therein.
Finally, we recall the main swap-free
method producing uniform random samples from $\mathcal G\left(\mathcal D\right)$,
namely the configuration model 
(CM)~\cite{Bol80,Ben78,Mol95,New01}. 
This method picks a pair of nodes
uniformly at random and connects them, until a rejection occurs due to a double link
or a self-loop, in which case it restarts from the very beginning.
For this reason, the CM can become very slow, as shown in the Discussion
section. The CM has inspired approximation methods as well~\cite{Bri06}
and methods that construct random graphs with given \emph{expected} degrees~\cite{Chu02}.

Here, by developing new results from the theorems in Ref.~\onlinecite{Kim09}, 
we present an efficient algorithm that solves this fundamental graph
sampling problem, and it is exact in the sense that it is not based on any heuristics. 
Given a graphical sequence, the algorithm always finishes with a simple
graph realization in polynomial time, and it is rejection free. While the samples
obtained are not uniformly generated, the algorithm also provides the exact weight 
for each sample, which can then be used to produce averages of arbitrary graph observables 
measured uniformly, or following any given distribution over $\mathcal G\left(\mathcal D\right)$.

\section{Mathematical foundations}
Before introducing the algorithm, we state some results that will be
useful later on. We begin with the Erdős-Gallai (EG) theorem~\cite{Erd60}, 
which is a fundamental result that allows us to determine whether 
a given sequence of non-negative integers, called ``degree sequence'' hereafter, is graphical.
\begin{theo}[Erdős-Gallai]\label{EG}
 A non-increasing degree sequence 
 $\mathcal{D}=\left\lbrace d_0, d_1, \dotsc, d_{N-1}\right\rbrace$
is graphical if and only if their sum is even and, for all $0 \leqslant k < N-1$:
\begin{equation}\label{EGeq}
\sum_{i=0}^kd_i\leqslant k\left( k+1\right) +\sum_{i=k+1}^{N-1}\min\left\lbrace k+1,d_i\right\rbrace \:.
\end{equation}
\end{theo}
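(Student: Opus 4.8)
\emph{Necessity} is the easy direction. Suppose $G$ realizes $\mathcal D$. Then $\sum_i d_i=2|E(G)|$ is even. Fix $k$ with $0\leqslant k<N-1$ and look at the ``heavy block'' $H=\{v_0,\dots,v_k\}$. In the sum $\sum_{i=0}^k d_i$, each edge lying inside $H$ is counted twice and there are at most $\binom{k+1}{2}$ such edges, contributing at most $k(k+1)$; each vertex $v_i$ with $i>k$ contributes to this sum exactly the number of its edges into $H$, which is at most $\min\{k+1,d_i\}$ (at most $k+1$ because $|H|=k+1$, at most $d_i$ because that is its degree). Adding these bounds gives \eqref{EGeq}.

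\emph{Sufficiency} is where the work is, and I would argue it by an extremal-graph construction. Among all simple graphs $G$ on the labeled set $\{v_0,\dots,v_{N-1}\}$ satisfying the cap condition $\deg_G(v_i)\leqslant d_i$ for every $i$, take those with the maximum number of edges, and among those pick one whose sequence $(\deg_G(v_0),\deg_G(v_1),\dots)$, read in the fixed order $v_0,v_1,\dots$, is lexicographically largest; intuitively this saturates the low-index (high-target) vertices first and pushes any leftover ``deficiency'' $d_i-\deg_G(v_i)$ onto the tail. If $G$ already realizes $\mathcal D$ we are done, so assume the deficiency set $S=\{i:\deg_G(v_i)<d_i\}$ is nonempty. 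Two quick observations: (i) $S$ induces a clique, since two non-adjacent deficient vertices could be joined by a new edge, contradicting edge-maximality; and (ii) $\sum_i(d_i-\deg_G(v_i))$ is even, because both $\sum_i d_i$ and $2|E(G)|$ are, so $S$ cannot consist of a single vertex of deficiency one.

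The core of the proof is to show that such a non-realizing extremal $G$ would force a violation of \eqref{EGeq}. Let $p=\min S$. Since $\deg_G(v_p)<d_p\leqslant N-1$, the vertex $v_p$ has a non-neighbor, and by (i) every non-neighbor of $v_p$ has its cap met. If some non-neighbor $v_a$ of $v_p$ had a neighbor $v_b$ with $b>p$ and $v_b\not\sim v_p$, then deleting $v_a v_b$ and adding $v_a v_p$ would keep $|E(G)|$ unchanged while increasing $\deg_G(v_p)$ and decreasing $\deg_G(v_b)$, producing a lexicographically larger sequence and contradicting our choice of $G$. Hence the neighborhoods around $v_p$ are rigidly ``saturated'': for every non-neighbor $v_a$ of $v_p$, all of its neighbors of index $\geqslant p$ are adjacent to $v_p$. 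Combining this rigidity with the clique structure of $S$ and with $p=\min S$, one identifies an initial block of vertices that is forced to be (almost) completely interconnected and to which $v_p$ is fully joined, and then a careful edge-count of that block yields $\sum_{i=0}^{k}d_i>k(k+1)+\sum_{i=k+1}^{N-1}\min\{k+1,d_i\}$ with $k$ taken to be essentially $\deg_G(v_p)$ or a nearby index — the desired contradiction, so $S=\varnothing$. I expect this last counting/rigidity step to be the main obstacle: pinning down exactly which $k$ exposes the violation, and correctly bookkeeping the edges among the block $\{v_0,\dots,v_p\}$, the saturated neighborhood of $v_p$, and the remaining vertices. An alternative I would keep in reserve is the classical induction on $\sum_i d_i$ (due to Choudum): decrease $d_0$ and a carefully chosen $d_t$ each by one, verify the reduced sequence still satisfies \eqref{EGeq} (this verification being itself the delicate point), realize it by the inductive hypothesis, and repair the realization with a single edge addition plus at most one swap.
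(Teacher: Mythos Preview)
The paper does not prove the Erd\H{o}s--Gallai theorem at all: it is stated as classical background and attributed to the original reference~\cite{Erd60}, with no argument given. So there is nothing to compare your proposal against on the paper's side.

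On the merits of your proposal itself: the necessity direction is correct and complete as written. For sufficiency, your extremal-graph-plus-lexicographic-choice outline is a recognizable strategy, but you yourself flag the gap accurately --- the ``rigidity plus edge-count'' step that is supposed to produce a violating index $k$ is not carried out, and this is genuinely the heart of the matter. In particular, the assertion that the neighborhood structure around $v_p$ forces an almost-complete initial block is not yet justified: the swap argument you give only tells you that non-neighbors of $v_p$ have their high-index neighbors confined to $N(v_p)\cup\{v_0,\dots,v_{p-1}\}$, and turning that into a clean inequality at a specific $k$ requires more bookkeeping than you have sketched. As it stands this is an outline with a known hole rather than a proof. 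The Choudum induction you mention as a fallback is a sound alternative and is probably the cleanest route if you want a self-contained argument; the delicate verification that the reduced sequence still satisfies all the inequalities is tedious but mechanical, and Choudum's original paper carries it out in full.
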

A necessary and sufficient condition for the graphicality of a degree sequence,
which is constrained from having links between some node and a 
``forbidden set" 
of other nodes
is given by the star-constrained graphicality theorem~\cite{Kim09}. 
In this case the forbidden links are all incident on one node and thus form a
``star''. To state the theorem, we first define 
the ``leftmost adjacency set'' of a node $i$ with degree $d_i$ in a 
degree sequence $\mathcal{D}$ 
as the set consisting of the $d_i$ nodes with the largest degrees 
that are \emph{not in} the forbidden set. 
If $\mathcal{D}$ is non-increasing, then the nodes in the leftmost adjacency
set are 
the first $d_i$ 
nodes in the sequence that are not in the forbidden set. 
The forbidden set could represent nodes that are either already connected to $i$,
and thus subsequent connections to them are forbidden, or just imposed arbitrarily.
Using this definition, the theorem is:
\begin{theo}[Star-constrained graphical sequences]\label{Th6}
 Let $\mathcal{D}=\left\lbrace d_0, d_1, \dotsc, d_{N-1}\right\rbrace$ be a
non-increasing graphical degree sequence. Assume there is a set of forbidden
links incident on a node $i$. Then a simple graph avoiding the
forbidden links can be constructed if and only if a simple graph can be constructed
where $i$ is connected to all the nodes in its leftmost adjacency set.
\end{theo}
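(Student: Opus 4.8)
\noindent\emph{Proof strategy.} One direction is immediate. By definition the leftmost adjacency set $L_i$ of node $i$ is disjoint from the forbidden set, so any realization of $\mathcal D$ in which $i$ is joined to exactly the nodes of $L_i$ automatically avoids all the forbidden links; contrapositively, if no forbidden-link-avoiding realization exists then neither does one with $i$ adjacent to all of $L_i$. Hence it suffices to prove the converse, and I may assume throughout that some realization $G$ of $\mathcal D$ avoiding the forbidden links exists (this assumption also guarantees that at least $d_i$ admissible nodes are available, so that $L_i$ is well defined).

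The plan is to transform $G$, by a finite sequence of degree-preserving double edge swaps, into a realization in which $N_G(i)=L_i$, where $N_G(i)$ denotes the neighborhood of $i$. I would track the overlap $m(G)=\left|N_G(i)\cap L_i\right|$ and show that whenever $m(G)<d_i$ there is another realization $G'$ of $\mathcal D$, still avoiding the forbidden links, with $m(G')=m(G)+1$; since $m$ is bounded above by $d_i$, this terminates. To build $G'$: because $\left|N_G(i)\right|=\left|L_i\right|=d_i$, the inequality $m(G)<d_i$ forces the existence of nodes $j\in L_i\setminus N_G(i)$ and $k\in N_G(i)\setminus L_i$, and since $L_i$ collects the $d_i$ admissible nodes of largest degree while $k$ is admissible but outside $L_i$, we get $d_j\geqslant d_k$. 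I then want to rewire $i$ away from $k$ and towards $j$.

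The step I expect to be the crux is guaranteeing that there is room for this rewiring: in $G$ itself the bound $d_j\geqslant d_k$ may be an equality, which is not enough to force the needed configuration. The resolution is to pass to $G-i$, where $j$ still has all $d_j$ of its neighbors but $k$ has lost its edge to $i$, giving the \emph{strict} inequality $\deg_{G-i}(j)=d_j\geqslant d_k>d_k-1=\deg_{G-i}(k)$. A short pigeonhole count — whenever $\deg(j)>\deg(k)$ there is a vertex adjacent to $j$, not adjacent to $k$, and equal to neither — then produces $x\notin\{i,j,k\}$ with $x\sim j$ and $x\not\sim k$ in $G$. The single double edge swap of $G$ that deletes $\{i,k\}$ and $\{j,x\}$ and inserts $\{i,j\}$ and $\{k,x\}$ now does the job, and I would verify the routine points: all four incident degrees are preserved, $G'$ is simple, the only new edge incident on $i$ is $\{i,j\}$ with $j$ admissible so no forbidden link is created, and $m(G')=m(G)+1$. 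Iterating yields a realization with $N_{G'}(i)=L_i$, which proves the equivalence. An entirely equivalent route works purely at the level of degree sequences: deleting $i$ reduces the claim to comparing, via the Erd\H{o}s--Gallai conditions of Theorem~\ref{EG}, the graphicality of the residual sequences obtained by decrementing the degrees on an arbitrary admissible $d_i$-set versus on $L_i$, and the swap above is precisely the statement that lowering a larger entry while raising a strictly smaller one preserves graphicality.
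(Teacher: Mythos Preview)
Your argument is correct. The double-edge-swap strategy---pick $j\in L_i\setminus N_G(i)$ and $k\in N_G(i)\setminus L_i$, pass to $G-i$ to force the strict inequality $\deg_{G-i}(j)>\deg_{G-i}(k)$, extract a witness $x\notin\{i,j,k\}$ with $x\sim j$ and $x\not\sim k$, and swap $\{i,k\},\{j,x\}\mapsto\{i,j\},\{k,x\}$---is sound, and your verification that the overlap $m(G)$ strictly increases while simplicity and the forbidden-link constraint are preserved goes through as you describe.

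As for the comparison: the present paper does not actually supply a proof of Theorem~\ref{Th6}; it quotes the statement from Ref.~\onlinecite{Kim09} and uses it as a black box. The swap argument you give is precisely the Havel--Hakimi-type proof employed in that reference, so your proposal matches the intended (cited) argument rather than offering an alternative route. The closing remark about an ``equivalent route'' via residual sequences and the Erd\H{o}s--Gallai inequalities is really a restatement of the same idea (your swap is exactly the graphical witness for Lemma~\ref{L3Kim}), not a genuinely different proof.
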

A direct consequence~\cite{Kim09} of Theorem~\ref{Th6} for the case of 
an empty forbidden set is the well-known Havel-Hakimi result~\cite{Hav55,Hak62}, 
which in turn implies:
\begin{corol}\label{FirstHappy}
 Let $\mathcal{D}=\left\lbrace d_0, d_1, \dotsc, d_{N-1}\right\rbrace$ be a
non-increasing unconstrained graphical degree sequence. Then, given any node
$i$, there is a realization of $\mathcal{D}$ that includes a link between the
first node and $i$.
\end{corol}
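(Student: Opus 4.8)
The plan is to obtain the statement as an immediate consequence of Theorem~\ref{Th6} (equivalently, of the Havel--Hakimi result it implies), applied not to the first node but to the target node $i$. Throughout I take $i\neq 0$ and $d_i\geqslant 1$; the remaining cases are degenerate (a self-link on the first node, or a node carrying no link, neither of which can occur in a realization), so they may be set aside.

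First I would apply Theorem~\ref{Th6} to node $i$ with an \emph{empty} forbidden set. In that situation the hypothesis ``a simple graph avoiding the forbidden links can be constructed'' reduces to ``$\mathcal{D}$ is graphical,'' which holds by assumption. The theorem therefore yields a realization $G$ of $\mathcal{D}$ in which $i$ is joined to every node of its leftmost adjacency set --- that is, to the $d_i$ nodes of largest degree other than $i$ itself.

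The key remaining step is to observe that the first node lies in that leftmost adjacency set. Since $\mathcal{D}$ is non-increasing, node $0$ has the maximal degree $d_0$, and since $i\neq 0$ it is the earliest entry of $\mathcal{D}$ distinct from $i$; as $d_i\geqslant 1$, the leftmost adjacency set of $i$ is non-empty and hence contains node $0$. Thus the link $\{0,i\}$ is present in $G$, and $G$ is precisely a realization of $\mathcal{D}$ that includes a link between the first node and $i$.

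I do not anticipate a genuine obstacle here: the whole content is the reduction to Theorem~\ref{Th6} applied at $i$ rather than at node $0$, together with the one-line remark on the location of the maximum-degree node. The only points that need a moment's care are the degenerate cases mentioned above and the convention that a node is implicitly excluded from its own leftmost adjacency set, so that node $0$ is genuinely the first element of it.
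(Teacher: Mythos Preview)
Your argument is correct and matches the paper's own route: the paper states the corollary as an immediate consequence of Theorem~\ref{Th6} with an empty forbidden set (i.e., the Havel--Hakimi result), and your proof is precisely the natural fleshing-out of that---apply Theorem~\ref{Th6} at node $i$, then note that node $0$, having maximal degree, necessarily lies in $i$'s leftmost adjacency set. The handling of the degenerate cases $i=0$ and $d_i=0$ is appropriate.
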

Another result we exploit here is Lemma 3 of Ref.~\onlinecite{Kim09}, extended 
to star-constrained sequences:
\begin{lem}\label{L3Kim}
Let $\mathcal D$ be a graphical sequence, possibly with a star constraint incident on 
node $i$.
Let $j$ and $k$ be distinct nodes not in the forbidden set and different from $i$,
such that $d_j > d_k$. Then $\mathcal{D'}=\left\lbrace d_0,\ldots,d_j-1,\ldots,d_k + 1,\ldots,d_{N-1}\right\rbrace$ is 
also a graphical sequence with the same star constraint.
\end{lem}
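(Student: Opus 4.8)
The plan is to work with graph realizations rather than with the Erdős-Gallai inequalities directly. Since $\mathcal{D}$ is graphical under the given star constraint, fix a simple graph $G$ that realizes $\mathcal{D}$ and avoids every forbidden link incident on $i$. From $G$ I will build a simple graph $G'$ that realizes $\mathcal{D}'$ and avoids the same forbidden links, using a single local edge rotation that transfers one unit of degree from $j$ to $k$. Graphicality is a property of the degree multiset, so it is immaterial that $\mathcal{D}'$ may no longer be sorted in non-increasing order.

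First I would establish the purely combinatorial fact that, because $d_j = \deg_G(j) > \deg_G(k) = d_k$, there must exist a vertex $\ell \notin \{j,k\}$ with $\ell \sim j$ and $\ell \not\sim k$ in $G$. Put $A = N_G(j) \setminus \{k\}$ and $B = N_G(k) \setminus \{j\}$; both are subsets of $V \setminus \{j,k\}$ since $G$ has no self-loops. Because $k \in N_G(j)$ if and only if $j \in N_G(k)$ (both express ``$j \sim k$''), the element $k$ is deleted from $N_G(j)$ exactly when $j$ is deleted from $N_G(k)$, so in all cases $|A| - |B| = \deg_G(j) - \deg_G(k) > 0$. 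Hence $A \not\subseteq B$, and any $\ell \in A \setminus B$ works: it lies outside $\{j,k\}$, is adjacent to $j$, and, being different from $j$, is non-adjacent to $k$.

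Next I would perform the rotation, setting $G' = G - \{j,\ell\} + \{k,\ell\}$. This $G'$ is simple, since $\{k,\ell\}$ is not a loop ($\ell \neq k$) and was not already an edge ($\ell \not\sim k$ in $G$). Its degree sequence is exactly $\mathcal{D}'$: only $j$ loses a unit and $k$ gains one, while $\ell$ is unchanged because it trades its edge to $j$ for an edge to $k$, and every other vertex is untouched. Finally I would verify that $G'$ still avoids the forbidden links. Deleting the edge $\{j,\ell\}$ cannot introduce a forbidden link, so the only thing to check is the new edge $\{k,\ell\}$; since every forbidden link is incident on $i$ and $k \neq i$ by hypothesis, $\{k,\ell\}$ could be forbidden only if $\ell = i$ and $k$ belonged to the forbidden set — but $k$ is assumed not to be in the forbidden set. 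So no forbidden link appears in $G'$, and $G'$ witnesses that $\mathcal{D}'$ is graphical under the same star constraint.

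I do not expect a serious obstacle. The one delicate point is the counting step, where I must handle the possibility that $j$ and $k$ are adjacent in $G$; this is resolved uniformly by noting that the two removals above occur together or not at all. The entire difference from the unconstrained Lemma~3 of Ref.~\onlinecite{Kim09} is the last verification, and it goes through precisely because the two vertices involved in the transfer are, by the hypotheses of the lemma, both different from $i$ and both outside the forbidden set.
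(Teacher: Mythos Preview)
Your proof is correct and follows essentially the same approach as the paper: take a realization $G$, locate a vertex adjacent to $j$ but not to $k$ (your $\ell$, the paper's $m$), and rotate the edge $\{j,\ell\}$ to $\{k,\ell\}$. You supply more detail than the paper does---in particular the careful counting argument for the existence of $\ell$ and the explicit case check that the new edge $\{k,\ell\}$ cannot be forbidden---but the underlying argument is the same.
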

\begin{proof}
Let $\mathcal X_i$ denote the set of nodes forbidden to connect to node $i$. 
Since $\mathcal D$ is star-constrained graphical there is a simple
graph $G$ realizing the sequence with no connections between $i$ and $\mathcal X_i$. Since
$d_j > d_k$, there is a node $m$ to which $j$ is connected but $k$ is not. Note that 
$m$ could 
be in $\mathcal X_i \cup \left\lbrace i\right\rbrace$. Now cut the edge $(m,j)$ of $G$ creating
a stub at $m$ and another at $j$. Remove the stub at $j$ so that its degree becomes 
$d_j - 1$, and add a stub at $k$ so that its degree becoming $d_k+1$. 
Since there are no
connections in $G$ between $m$ and $k$, connect the two stubs at these nodes creating
a simple graph $G'$ thus realizing $\mathcal D'$. Clearly there are still no connections 
between $i$ and $\mathcal X_i$ in $G'$, and thus $\mathcal D'$ is also star-constrained 
graphical.
\end{proof}

Finally, using Lemma~\ref{L3Kim} and Theorem~\ref{Th6}, we prove:
\begin{theo}\label{FMT}
Let $\mathcal{D}$ be a degree sequence, possibly with a star-constraint
incident on node $i$, and let $y$ and $z$ be two nodes with degrees such
that $d_y \geqslant d_z$ that are not constrained from linking to node $i$. If the
residual degree sequence $\mathcal{D}'$ obtained from $\mathcal{D}$ by 
reducing the degrees at $i$ and $y$ by unity is not graphical, then the degree 
sequence $\mathcal{D}''$ obtained from $\mathcal{D}$ by reducing 
the degrees at $i$ and $z$ by unity is also not graphical.
\end{theo}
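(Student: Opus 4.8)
The plan is to prove the contrapositive: assuming $\mathcal{D}''$ is graphical (respecting the star constraint on $i$, if one is present), I will show that $\mathcal{D}'$ is graphical as well. The key observation is that $\mathcal{D}'$ and $\mathcal{D}''$ are produced from the common parent $\mathcal{D}$ by decrementing the entry at $i$ in both cases and then decrementing the entry at $y$ in one case and at $z$ in the other. Hence the two sequences coincide everywhere except at the coordinates $y$ and $z$: in $\mathcal{D}''$ these entries equal $d_y$ and $d_z-1$, while in $\mathcal{D}'$ they equal $d_y-1$ and $d_z$. So $\mathcal{D}'$ is exactly what one obtains from $\mathcal{D}''$ by moving a single unit of degree from $y$ to $z$, which is precisely the operation treated in Lemma~\ref{L3Kim}.

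First I would dispose of the trivial case $y=z$, where $\mathcal{D}'=\mathcal{D}''$ and nothing is to be shown, and then collect the facts needed to apply Lemma~\ref{L3Kim} to $\mathcal{D}''$. Because $d_y\geqslant d_z$ we have $d_y>d_z-1$, so in $\mathcal{D}''$ the degree of $y$ is strictly larger than the degree of $z$. Moreover $y$ and $z$ are distinct from $i$ and lie outside the forbidden set $\mathcal{X}_i$ --- this is exactly the hypothesis that they are not constrained from linking to $i$ --- and lowering the entry at $i$ does not alter $\mathcal{X}_i$, so $\mathcal{D}$, $\mathcal{D}''$ and $\mathcal{D}'$ all carry the same star constraint.

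Then I would apply Lemma~\ref{L3Kim} to $\mathcal{D}''$, taking its high-degree node to be $y$ and its low-degree node to be $z$: the lemma guarantees that the sequence obtained from $\mathcal{D}''$ by lowering the degree at $y$ by one and raising the degree at $z$ by one is again graphical with the same star constraint. That sequence is $\mathcal{D}'$, which finishes the contrapositive. The one point that genuinely needs care is which forbidden set each residual sequence is taken to carry. If, as in the algorithmic application, the edge just placed at $i$ is recorded by enlarging the forbidden set --- so that $z$ becomes forbidden in $\mathcal{D}''$ --- then Lemma~\ref{L3Kim} cannot be invoked with $k=z$ directly; in that situation I would instead work with an explicit realization $G''$ of $\mathcal{D}''$ and, using an edge-swap in the spirit of the proof of Lemma~\ref{L3Kim} together with Theorem~\ref{Th6} to control where edges at $i$ land, transfer the unit of degree from $y$ to $z$ while ensuring that $i$ ends up non-adjacent to $y$.

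I do not expect a deep obstacle: once the degree-transfer relation between $\mathcal{D}'$ and $\mathcal{D}''$ is recognized, the theorem is essentially a repackaging of Lemma~\ref{L3Kim}. The main work, and the place where Theorem~\ref{Th6} is really needed, is the variant above in which connecting $i$ enlarges the forbidden set; there a short case analysis is required for realizations in which $i$ is already adjacent to $y$, including the degenerate sub-cases in which the neighborhoods of $y$ and $z$ are nested --- those force $d_y=d_z$ and can be settled by simply relabeling the two nodes.
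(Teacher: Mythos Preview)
Your core argument is correct and is essentially identical to the paper's proof: assume $\mathcal{D}''$ is graphical, observe $d_y > d_z - 1$, and apply Lemma~\ref{L3Kim} to move a unit of degree from $y$ to $z$, yielding that $\mathcal{D}'$ is graphical, a contradiction. Your final two paragraphs about an enlarged forbidden set are unnecessary here: in the theorem as stated the star constraint on $i$ is fixed, not updated after the decrement, so Lemma~\ref{L3Kim} applies directly with $j=y$, $k=z$ and no further case analysis or appeal to Theorem~\ref{Th6} is needed.
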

\begin{proof}
By definition, $d'_l=d_l$ for $l \in \left\lbrace0,\ldots,N-1\right\rbrace \setminus \left\lbrace i,y\right\rbrace$ and 
$d'_i = d_i - 1$, $d'_y = d_y - 1$; $d''_l=d_l$ for $l \in \left\lbrace0,\ldots,N-1\right\rbrace 
\setminus \left\lbrace i,z\right\rbrace$ and $d''_i = d_i - 1$, $d''_z = d_z - 1$. 
We consider $d_i \geqslant d_y$, however, the proof is not affected by this 
assumption. By assumption, $\mathcal D'$ is not graphical. 
Using proof by contradiction, assume that $\mathcal D'' = \left\lbrace\ldots,d_i - 1,\ldots, d_y,\ldots,
d_z-1, \ldots \right\rbrace$ is graphical.  Clearly, $d_y > d_z - 1$, and thus we can apply Lemma 
\ref{L3Kim} on this sequence. 
As a result, the sequence $\left\lbrace \ldots, d_i - 1, \ldots, d_y-1,\ldots, d_z-1+1,\ldots \right\rbrace$, that is 
exactly $\mathcal D'$ is graphical, a contradiction.
\end{proof}

Note that if a sequence is non-graphical, then it is not star-constrained graphical either,
and thus Theorem~\ref{FMT} is in its strongest form. 

\section{Biased sampling}
The sampling algorithm described below is ergodic in the sense that every possible
simple graph with the given finite degree sequence is generated with non-zero probability.
However, it does not generate the samples with uniform probability; the sampling is biased.
Nevertheless, the algorithm can be used to compute network observables that are 
unbiased, by appropriately weighing the averages measured
from the samples. According to a well known principle of biased sampling~\cite{New99,Coc77}, if the
relative probability of generating a particular sample $s_i$ is $\rho_{s_i}$, then
an unbiased estimator for an observable $Q$ measured from a set of $M$ randomly
generated samples $s_1, s_2, \ldots, s_M$ is the weighted average
\begin{equation}\label{two}
\langle Q\rangle=\frac{\sum_{i=1}^M w_{s_i} Q\left(s_i\right)}{\sum_{i=1}^{M}w_{s_i}}\:,
\end{equation}
where the weights are $w_{s_i}=\rho^{-1}_{s_i}$, and the denominator is a normalization
factor. The key to this method is to find the appropriate weight $w_{s_i}$ to associate
with each sample. Note that in addition to uniform sampling, it is in fact possible to
sample with any arbitrary distribution by choosing an appropriate set of sample weights.

\section{The algorithm}
Let $\mathcal D$ be a non-increasing
graphical sequence. We wish to sample the set $\mathcal G\left(\mathcal D\right)$ of graphs that realize 
this sequence. The graphs can be systematically
constructed by forming all the links involving each node. To do so, begin by
choosing the first node in the sequence as the ``hub'' node and then build the set
of the ``allowed nodes'' $\mathcal{A}=\left\lbrace a_1, a_2, \dotsc, a_k\right\rbrace$
that can be connected to it. $\mathcal{A}$ contains all the nodes that can be connected
to the hub such that if a link is placed between the hub and a node from $\mathcal{A}$, 
then a simple graph can
still be constructed, thus preserving graphicality.
Choose uniformly at random a node $a\in\mathcal{A}$, and place a link 
between $a$ and the hub. If $a$ still has ``stubs'', i.e.\ remaining links to be
placed, then add it to the set of ``forbidden nodes'' $\mathcal{X}$ that contains
all the nodes which can't be linked anymore to the hub node and which initially 
contains only the hub;
otherwise, if $a$ has no more stubs to connect, then remove it from further consideration.
Repeat the construction of $\mathcal{A}$ and link the hub with one
of its randomly chosen elements until the stubs of the hub are exhausted. 
Then remove the hub from further consideration, and repeat the whole procedure 
until all the links are made and the sample construction is complete. Each time
the procedure is repeated, the degree sequence $\mathcal{D}$ considered is 
the ``residual
degree sequence'', that is the original degree sequence reduced by the links that
have previously been made, 
and with any 
zero residual degree node removed from the sequence.
Then, choose a new hub, empty
the set
of forbidden nodes $\mathcal{X}$ and add the new hub to it.
It is convenient, but not necessary, to choose the new hub to be a node with 
maximum degree in the residual degree sequence.

The sample weights needed to obtain unbiased estimates using Eq.~\ref{two}
are the inverse relative probabilities of generating the particular samples.
If in the course of the construction of the sample $m$ different nodes $i=1,2,\ldots,m$
are chosen as the hub and they have $\bar d_i$ residual degrees when they are 
chosen, then this sample weight can be computed by first taking the product
of the sizes $k_{i_j}$ of the allowed sets $\mathcal{A}$ constructed,
then dividing this quantity by a combinatorial factor which is the product of the 
factorials of the residual degrees of each hub:
\begin{equation}\label{weight}
w=\prod_{i=1}^{m}\frac{1}{\bar d_i!}\prod_{j=1}^{\bar d_i}k_{i_j}\;.
\end{equation}
The weight accounts for the fact that at each step the hub node
has $k_{i_j}$ nodes it can be linked to, which is the size of the 
allowed set at that point, and that 
the number of equivalent ways
to connect the residual stubs of a new hub is $\bar d_i!$.
Note that it is always true that $w \geqslant 1$, with $w=1$ occurring 
for sequences for which there is only one possible graph.

\subsection{Building the allowed set}
The most difficult step in the sampling algorithm is to construct the set of
allowed nodes $\mathcal{A}$. 
In order to do so first note that Theorem~\ref{FMT} implies that if a 
non-forbidden node, that is a node not in $\mathcal X$, can be added to $\mathcal A$, 
then all non-forbidden nodes with equal or higher degree can also be added 
to $\mathcal A$.
Conversely, if it is determined that a non-forbidden node cannot 
be added to $\mathcal A$, then all nodes with equal or smaller degree 
also cannot be added to $\mathcal A$.
Therefore, referring
to the degrees of nodes that cannot be added to $\mathcal{A}$ as ``fail-degrees'',
the key to efficiently construct $\mathcal A$ is to determine the maximum fail-degree, 
if fail-degrees exist.

The first time $\mathcal{A}$ is constructed for a new hub,
according to Corollary~\ref{FirstHappy}, there is no fail-degree and $\mathcal{A}$
consists of all the other nodes.
However, constructing $\mathcal{A}$ becomes more difficult once
links have been placed from the hub to other nodes. In this case,
to find the maximum fail-degree note that
at any step during the construction
of a sample the residual sequence being used is graphical. Then, since
according to Theorem~\ref{Th6} any connection to the leftmost 
adjacency set of the hub preserves graphicality, it follows from Theorem~\ref{FMT} that
any fail-degree has to be strictly less than the degree of any node in the leftmost 
adjacency set of the hub. 

If there are non-forbidden nodes in the residual degree sequence that have 
degree less than any in its leftmost adjacency set, then the maximum 
fail-degree can be found with a procedure that exploits Theorem~\ref{Th6}. 
In particular, if the hub is connected to a node with a fail-degree,
then, 
by Theorem~\ref{Th6}, even if all the 
remaining links from the hub were connected to the remaining nodes in the leftmost adjacency set, 
the residual sequence will not be graphical. Our method
to find fail-degrees, given below, is based on this argument.

Begin by constructing a new residual sequence ${\mathcal D}'$
by temporarily assuming that 
links exist between the hub and all the nodes in its leftmost adjacency set 
\emph{except for the last one}, which has the lowest degree in the set.
The nodes temporarily linked to the hub should also be temporarily added 
to the set of forbidden nodes
$\mathcal{X}$. The nodes in 
$\mathcal{D'}$ 
should be 
ordered so that it is non-increasing, that forbidden nodes appear before non-forbidden 
nodes of the same degree, and that the hub, which now has residual degree 1, is last. 

At this point, in principle one could find the maximum fail degree by 
systematically connecting the last link of the hub 
with non-forbidden nodes of decreasing degree, and testing each time for 
graphicality using Theorem~\ref{EG}. 
If it is not graphical then the degree
of the last node connected to the hub is a fail-degree, and the node with the largest 
degree for which this is true will have the maximum fail-degree.
However, this procedure is inefficient because each time a new node is 
linked with the hub the residual sequence changes and every new sequence
must be tested for graphicality. 

A more efficient procedure to find the maximum fail-degree instead involves 
only testing the sequence 
$\mathcal{D'}$. 
To see how this can be done,
note that $\mathcal{D'}$ is a graphical sequence, by Theorem~\ref{Th6}. 
Thus, by Theorem~\ref{EG}, for all relevant values of $k$,
the left hand side of Inequality~\ref{EGeq}, $L_k$, and the right
hand side of it, $R_k$, satisfy $L_k \leqslant R_k$.
Furthermore, for the purposes of finding fail-degrees it is sufficient 
to consider linking the final stub of the hub with only the last
non-forbidden node of a given degree, if any exists. 
After any such link is made, the resulting degree-sequence ${\mathcal D}''$
will be non-increasing, and thus Theorem 1 can be applied to test it for graphicality. 
Therefore, if the degree of the node connected with the last stub of the hub is a fail-degree, 
then Inequality~\ref{EGeq} for $\mathcal{D''}$ must fail for some $k$.
For each $k$, the possible differences in $L_k$ and $R_k$ between
$\mathcal{D'}$ 
and 
$\mathcal{D''}$ 
are as follows.
$R_k$ is always reduced by 1 because the residual degree of the hub is 
reduced from 1 to 0. $R_k$ may be reduced by an another factor of 1 if the last
node 
connected to the hub,
having index $x$ and degree $d_x$,
is such that $x > k$ and $d_x < k+2$.
$L_k$ is reduced by 1 if $x\leqslant k$, otherwise it is unchanged.

Considering these conditions that can cause Inequality~\ref{EGeq} to fail
for $\mathcal{D''}$,
the set of allowed nodes $\mathcal{A}$ can be constructed
with the following algorithm that requires only
testing $\mathcal{D'}$.
Starting with
$k=0$,
compute the values of $L_k$ and $R_k$ for $\mathcal{D'}$.
There are three possible cases:
(1) $L_k=R_k$,
(2) $L_k=R_k-1$,
and
(3) $L_k\leqslant R_k-2$.
In case (1)
fail-degrees occur whenever $L_k$ is unchanged by making the final link to the hub.
Thus, the degree of the first non-forbidden node whose index is
greater than $k$ is the largest fail-degree found with this value of $k$.
In case (2)
fail-degrees occur whenever $L_k$ is unchanged and $R_k$ is reduced
by 2 by making the final link to the hub.
Thus, the degree of the first non-forbidden node whose index is
greater than $k$ 
and whose degree is less than $k+2$
is the largest fail-degree found with this value of $k$.
In case (3) no fail-degree can be found with this value of $k$.
Repeat this process sequentially increasing $k$, until all the relevant 
$k$ values have
been considered, then retain the maximum fail-degree.
It can be shown that the algorithm can be stopped either after a case
(1) occurs, or after $k=r$ where $r$ is the lowest index of any node in
$\mathcal{D'}$ with degree $d'_r < r$.
Once the maximum fail-degree is found, remove the nodes that were 
temporarily added to $\mathcal{X}$
and construct $\mathcal{A}$ by
including all non-forbidden nodes of $\mathcal D$ with a higher degree.
If no fail-degree is ever
found, then all non-forbidden nodes
of $\mathcal{D}$ are included in $\mathcal{A}$.
$\mathcal{A}$
will always include the leftmost adjacency set of the hub
and any non-forbidden nodes of equal degree.

Note that 
after a link is placed in the sample construction process, the 
residual degree sequence $\mathcal{D}$ changes, and therefore, $\mathcal{A}$ 
has to be determined every time.

\subsection{Implementing the Erd\H{o}s-Gallai test}
Finally, 
$L_k$ and $R_k$
should be calculated efficiently.
Calculating the sums that comprise them
for each new value of $k$ can be computationally
intensive, especially for long sequences. Even computing them only for as many distinct
terms as there are in the sequence, as suggested in Ref.~\onlinecite{Tri03}, can still
become slow if the degree distribution is not quickly decreasing. Instead,
it is much more efficient to use recurrence relations to calculate them.

A recurrence relation for $L_k$ is simply
\begin{equation}\label{RecS}
L_k=L_{k-1}+d_k\:,
\end{equation}
with $L_0=d_0$. 

For non-increasing degree sequences, 
define the ``crossing-index'' $x_k$ for each $k$
as the index of first node that has degree
less than $k+1$, that is for which $d_i < k+1$ for all $i \geqslant x_k$. 
If no such index exists, such as for $k=0$ since the minimum 
degree of any node in the sequence
is 1, then set $x_k=N$. 
Then, a recurrence relation for $R_k$ is 
\begin{equation}
R_k
=
R_{k-1}
+
x_k-1
-
(x_k-1-2k+d_k)\;
\Theta(k+1-x_k)
\label{RecM1}
\end{equation}
where $\Theta$ is a discrete equivalent of the Heaviside function, defined to be 1
on positive integers and 0 otherwise, and $R_0 = N-1$. 
Or, since the crossing-index can not increase with $k$, that is 
$x_{k} \leqslant x_{k-1}$
for all $k$,
a value $k^*$ will exist for which $x_k < k+1$ for all $k \geqslant k^*$,
and so Eq.~\ref{RecM1} can be written
\begin{equation}
R_k
=
\left\lbrace
\begin{array}{ll}
R_{k-1}
+
x_k-1
\qquad
&
\mbox{for} \; k < k^* \\
R_{k-1}
+
2k-d_k
&
\mbox{for} \; k \geqslant k^* \\
\end{array}
\right.
\label{RecM}
\end{equation}
Thus, 
there is no need to find $x_k$
for $k > k^*$. 

Using 
Eqs.~\ref{RecS} and~\ref{RecM}, the mechanism
of the calculation of $L_k$ and $R_k$
at sequential values of $k$
is shifted from a slow repeated
calculation of sums of many terms
to the much less computationally intensive task of
calculating the recurrence relations.
In order to perform the test efficiently,
a table of the values of crossing-index $x_k$ for each relevant $k$ can be 
created as $\mathcal{D}'$ is constructed.

\begin{figure}
 \centering
\includegraphics[width=0.40\textwidth]{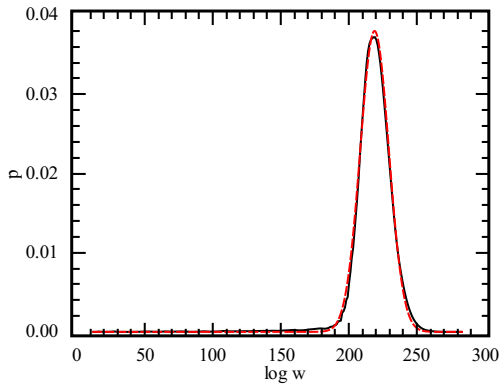}
\caption{\label{lognorm}Probability distribution $p$ of
the logarithm of weights for an ensemble of power-law sequences with $N=100$
and $\gamma=3$. The ensemble contained
$2\times10^4$ graphical sequences, and for each sequence $10^6$ graph samples were produced.
Thus, the total number of samples produced was $2\times 10^{10}$. The simulation
data is given by the solid black line and a Gaussian fit of the data is shown by the dashed red line
that nearly obscures the black line.}
\end{figure}

It should be noted that the usefulness of this method for calculating
$L_k$ and $R_k$ is broader than its use 
for calculating fail-degrees
in our sampling algorithm.
In particular, it can be used in an
Erdős-Gallai test 
to efficiently determine whether a degree-sequence is graphical.

\section{Sample weights}
As previously stated, the weight $w$ associated with a
particular sample, given by Eq.~\ref{weight}, 
is the product of the sizes $k_{i_j}$ of all the sets of allowed
nodes that have been built for each hub node $i$
divided by the product of the factorials of the initial residual degrees of
each hub node. The logarithm of this weight is
\begin{equation}
\log w=\sum_{i=1}^{m}\left[\left(\sum_{j=1}^{\bar d_i}\log k_{i_j}\right)
-\log\left(\bar d_i!\right)\right]\:.
\label{lw}
\end{equation}
Generally, degree sequences with $N\gg 1$ admit many graphical realizations.
When this is true,
each of the $m$ terms in square brackets
in Eq.~\ref{lw} are effectively random and independent,
and, by virtue of the central limit theorem,
their sum will be normally distributed. That is, the
weight $w$ of graph samples generated from a given degree sequence
with large $N$ is typically log-normally distributed.
However, 
degree sequences 
with $N \gg 1$
that have only a small number of realizations
do exist,
and
$w$ is not expected to be log-normally distributed for those sequences.

\begin{figure}
 \centering
 \includegraphics[width=0.40\textwidth]{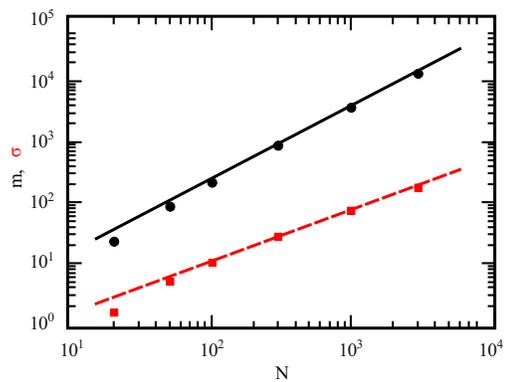}
\caption{\label{paramfit}Mean $m$ and standard deviation $\sigma$ of the distributions of
the logarithm of the weights vs.\ number of nodes $N$ of samples from an ensemble of power-law
sequences with $\gamma=3$. The black circles correspond to $m$, the red squares correspond
to $\sigma$. The error bars are smaller than the symbols. The solid black line
and the dashed red line show the outcomes of fits on the data. The linearity of the data
on a logarithmic scale indicates that the $m$ and $\sigma$ follow power-law scaling relations with
$N$: $m\sim N^\alpha$ and $\sigma\sim N^\beta$. The slopes of the fit lines are an estimate of the
value of the exponents: $\alpha=1.22042\pm0.00007$ and $\beta=0.8599\pm0.0018$.}
\end{figure}

Furthermore, one can consider not just samples of a particular 
graphical sequence, but of an ensemble of sequences.
By a similar argument to that given above for individual sequences, the 
weight $w$ of graph samples generated from an ensemble of sequences
will also typically be log-normally distributed in the limit of large $N$.
For example, consider an ensemble of sequences of randomly chosen 
power-law distributed degrees,
that is, sequences of random integers chosen from a probability distribution
$P(d) \sim d^{-\gamma}$. Hereafter, we refer to such sequences as ``power-law sequences.''
Figure~\ref{lognorm} shows the probability distribution
of the logarithm of weights for realizations of
power-law sequences with exponent $\gamma=3$ and $N=100$. 
Note that this distribution is well approximated by a Gaussian fit.

We have also studied the behavior of the mean and the standard deviation of the
probability distribution
of the logarithm of the weights of such power-law sequences as a function of $N$.
As shown in Fig.~\ref{paramfit},
they scale as a power-law. 
We have found qualitatively similar results, including power-law scaling
of the growth of the mean and variance of the distribution of $\log w$,
for binomially distributed degree sequences
that correspond to those of Erdős-Renyi random graphs with 
node connection probability
$p$ such that $pN=4$, 
and for uniformly distributed degree sequences, that is power-law sequences with
$\gamma = 0$, with 
an upper limit, or cutoff, 
of $\sqrt{N}$ for the degree of a node. 
However, 
for uniformly distributed degree sequences
without an imposed upper limit on node degrees, we find that
the sample weights are not log-normally distributed.

\section{Complexity}
In this section we discuss 
the algorithm's computational complexity. 
We first provide an upper bound on the worst case complexity, given a
degree sequence $\mathcal D$. Then, using extreme value arguments, 
we conservatively estimate the
average case complexity for degree sequences of random integers chosen from
a distribution $P(d)$. 
The latter is useful for realistically estimating the computational
costs for sampling graphs from ensembles of long sequences.

To determine an upper bound on the worst case complexity for 
constructing a sample from a given degree sequence $\mathcal D$, 
recall that the algorithm connects all the stubs of the current hub 
node before it moves on to the hub node of the new residual sequence. 
For every stub from the hub one must 
construct the allowed set $\mathcal A$.
The algorithm for constructing $\mathcal A$,
which includes constructing $\mathcal D'$, 
performing the 
$L_k$ vs $R_k$
comparisons,
and
determining the maximum fail-degree,
can be completed in $O[N-j]$ steps, where $N-j$ is the maximum possible number of nodes 
in the residual sequence after eliminating $j$ hubs from the process.
Therefore, an upper bound on the worst case complexity $C_w$ of the algorithm
given a sequence $\mathcal D$ is:
\begin{equation}
C_w \leqslant O\Big(\sum_{j} (N-j)d_j \Big) \leqslant 
O\left(N \sum_{j} d_j\right)\label{Cw}
\end{equation} 
where the sum involves at most $O(N)$ terms.
Equivalently, 
$C_w \leqslant O(NN_l)$, with $N_l$ being the number of links in the graph. 
For simple graphs, the maximum possible number of links is $O(N^2)$, and
the minimum possible number is $O(N)$.
If $N_l = O(N)$, then $C_w \leqslant O(N^2)$, and if $N_l = O(N^2)$, then
$C_w \leqslant O(N^3)$, which is an upper bound, independent of
the sequence.

\begin{figure}
 \centering
\includegraphics[width=0.40\textwidth]{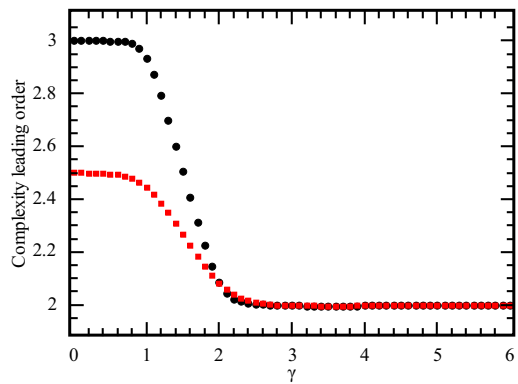}
\caption{\label{PLfits} The estimated
computational complexity of the algorithm for power-law sequences.
The leading order of the computational complexity
of the algorithm as a power of $N$, where $N$ is the number of nodes,
is plotted as a function of the degree distribution power-law exponent $\gamma$.
The black circles correspond to ensembles of sequences without cutoff, 
while the red squares correspond
to ensembles of sequences with structural cutoff in the maximum degree
of $d_{\max}=\sqrt{N}$. The fits that
yielded the data points were carried out considering sequences ranging in size 
from $N=100$ to $N=10^8$.}
\end{figure}

From Eq.~\ref{Cw}, 
the expected 
complexity for the algorithm
to construct a sample for a degree sequence of random integers chosen from
a distribution $P(d)$, normalized to unity, 
can be 
conservatively estimated as
\begin{equation}\label{compl}
C \sim O\left( \sum_{j=0}^{N-2} (N-j) \hat{d}_j \right) \:. 
\end{equation}
Here $\hat{d}_j$ is the expectation value for
the degree of the node with index $j$, 
which is the largest degree for which the expected number
of nodes with equal or larger degree is at least $j+1$. That is,
\begin{equation}\label{xhat}
\hat{d}_j
 =\max\left\lbrace d^*:\:N\sum_{d=d^*}^{d_{\max}}P(d) \geqslant j+1\right\rbrace \:.
\end{equation}
Notice that the sum in the above equation runs to the maximum allowed degree in the network
$d_{\max}$, which is nominally $N-1$, but a different value can be imposed. 
For example, in the
case of power-law sequences, the so-called structural cutoff of 
$d_{\max} \leqslant \sqrt{N}$ 
is necessary if degree correlations are to be avoided~\cite{Chu02,Bur03,Bog04}. 
However, such a cutoff needs
to be imposed only for $\gamma<3$, because
the expected maximum degree $\hat{d}_0$ in a power-law network 
grows like $N^{\frac{1}{\gamma-1}}$.
Thus, for $\gamma \geqslant 3$, $\hat{d}_{0}$ grows no faster than $\sqrt{N}$ 
and no degree correlations
exist for large $N$~\cite{Cat05}.

Given a particular form of distribution 
$P(d)$,
Eq.~\ref{compl} can be computed for different values of $N$.
Subsequent fits of the results to a power-law function
allow the order of the complexity of the algorithm to be estimated. Figure~\ref{PLfits}
shows the results of such calculations for power-law sequences with and without the structural
cutoff of $d_{\max}=\sqrt{N}$ as a function of exponent $\gamma$.
Note that, in the absence of cutoff, 
the results indicate that the order of the complexity goes to a
value of 3 for $\gamma\rightarrow0$, that is, in the limit of a uniform degree 
distribution. 
However,
if the structural cutoff is imposed the order of the 
complexity is only $2.5$ in this limit. 
Both these
results are easily verified analytically. 

We have tested the estimates shown in Fig.~\ref{PLfits}
with our implementation of the
sampling algorithm for power-law sequences with and without the structural cutoff
for certain values of $\gamma$, including 0, 2, and 3. 
This was done by measuring the actual execution times for generating samples
for different $N$ and fitting the results to a power-law function.
In every case, 
the actual order of the complexity of our implementation of the
sampling algorithm was equal to or slightly less than 
its estimated value shown in Fig.~\ref{PLfits}.

\section{Discussion}
We have solved the long standing problem of how to efficiently and accurately
sample the possible graphs of any graphical degree sequence, and of any ensemble
of degree sequences.
The algorithm we present for this purpose is ergodic and is guaranteed to
produce an independent sample in, at most, $O(N^3)$ steps. Although
the algorithm generates samples non-uniformly, and, thus, it is biased, 
the relative probability of generating each sample 
can be calculated explicitly permitting unbiased
measurements to be made. Furthermore, because the sample weights
are known explicitly, the algorithm makes it possible to sample with
any arbitrary distribution by appropriate re-weighting.

It is important to note that the sampling algorithm is guaranteed to
successfully and systematically proceed in constructing a graph.
This behavior contrasts with that of other algorithms, 
such as the configuration model (CM),
which can run into dead ends that require back-tracking or restarting,
leading to considerable losses of time and potentially introducing an uncontrollable
bias into the results. 
While there are classes of sequences 
for which it is perhaps preferable to use the CM
instead of our algorithm, in other cases its performance relative to ours can be remarkably poor.
For example, a configuration model code failed to produce even a single sample of
a uniformly distributed graphical sequence, $P(d) = const.$, with $N=100$, 
after running for more than 24 hours, while our
algorithm produced $10^4$ samples of the very same sequence in 30 seconds.
Furthermore, each sample generated by our algorithm is independent. This behavior
contrasts with that of algorithms based on MCMC methods.
Because our algorithm works for any graphical sequence and for any ensemble
of random sequences, it allows arbitrary classes of graphs to be studied.

One of the features of our algorithm that makes it efficient 
is a method of calculating the left and right sides
of the inequality in the Erdős-Gallai theorem using recursion relations.
Testing a sequence for graphicality can thus be
accomplished without requiring repeated computations
of long sums, and the method
is efficient even when the sequence is nearly non-degenerate.
The usefulness of this method is not limited to the
algorithm presented for graph sampling, but can be used anytime a fast test of the
graphicality of a sequence of integers is needed.

There are now over 6000 publications focusing on complex networks. In many of these
publications various processes, such as network growth, flow on networks, epidemics, etc.,
are studied on toy network models used as ``graph representatives'' simply because 
they have become customary to study processes on. 
These include the Erdős-Rényi random graph model, the Barabási-Albert preferential 
attachment model, the Watts-Strogatz small-world network model, random geometric graphs, etc.
However, these toy models are based on specific processes that constrain their
structure beyond their degree-distribution, which in turn might not actually correspond 
to the processes that have led to the structure of the networks investigated
with them, thus potentially introducing dangerous biases in the conclusions of these studies.
The algorithm presented here provides a way to study classes of simple graphs constrained 
solely by their degree sequence, and nothing else. However, additional constraints,
such as connectedness, or any functional
of the adjacency matrix of the graph being constructed,
can in principle be added
to the algorithm to further restrict the graph class built.

\begin{acknowledgments}
CIDG and KEB are supported by the NSF through grant DMR-0908286 and by
the Norman Hackerman Advanced Research Program through grant 95921. HK
and ZT are supported in part by the NSF BCS-0826958 and by DTRA through
HDTRA 201473-35045. The authors gratefully acknowledge Y.~Sun, B.~Danila,
M.~M.~Ercsey~Ravasz, I.~Miklós, E.~P.~Erdős and L.~A.~Székely for fruitful
comments, discussions and support.
\end{acknowledgments}

\end{document}